\documentclass[letterpaper, 10 pt, conference]{ieeeconf}  

\IEEEoverridecommandlockouts                              
\usepackage[utf8]{inputenc}
\usepackage{cite}

\let\labelindent\relax
\usepackage{amsmath, amssymb, amsthm, amsfonts}
\usepackage{graphicx}
\usepackage{algorithm}
\usepackage{algpseudocode}
\usepackage{xcolor}
\usepackage{booktabs}
\usepackage{enumitem}
\usepackage[hidelinks]{hyperref}
\usepackage{placeins} 
\theoremstyle{plain}
\newtheorem{theorem}{Theorem}
\newtheorem{proposition}{Proposition}
\newtheorem{lemma}{Lemma}

\newtheorem{assumption}{Assumption}
\theoremstyle{definition}
\newtheorem{definition}{Definition}

\theoremstyle{remark}

\DeclareMathOperator*{\Proj}{Proj}
\newcommand{\SafeFilter}{\Gamma}           

\newcommand{\Psafe}{P_{\mathrm{safe}}}

\title{\LARGE \bf
A Stackelberg Game Framework with Drainability Guardrails for Pricing and Scaling in Multi-Tenant GPU Cloud Platforms
}

\author{Junji Yan, Asrin Efe Yorulmaz, Hanchen Zhou, and Tamer Ba\c{s}ar 
\thanks{This work was supported in part by the IBM-ILLINOIS Discovery Accelerator Institute (IIDAI) under Grant \#122955 and in part by the AFOSR Grant FA9550-24-1-0152.  Junji Yan, Asrin Efe Yorulmaz, and Tamer Ba\c{s}ar are with the Coordinated Science Laboratory (CSL), University of Illinois Urbana-Champaign (UIUC), Urbana, IL 61801, USA. Hanchen Zhou was also with CSL, UIUC; he is now with CFINS, Department of Automation, Tsinghua University, Beijing 100084, China. Emails: \{junjiy2, ay20, basar1\}@illinois.edu, zhouhc22@mails.tsinghua.edu.cn.}
}

\begin{document}

\maketitle


\thispagestyle{empty}
\pagestyle{empty}

\begin{abstract}
Modern Graphics Processing Unit (GPU)-backed services must satisfy strict latency service-level objectives (SLOs) while controlling spare-capacity costs. In multi-tenant GPU cloud platforms, this trade-off is inherently dynamic because workload demand is endogenous; specifically, pricing shapes the submissions of heterogeneous tenants, which subsequently impact congestion and delay. We formulate the joint pricing-and-scaling problem as a large-population Stackelberg game problem, and we derive an explicit equilibrium demand map. The resulting closed-loop model reveals a structural failure mode in which delay-insensitive workloads sustain a residual demand floor, making the backlog undrainable under bounded price and service capacity. This observation motivates a computable drainability guardrail that certifies uniformly negative backlog drift in the residual-demand regime. For any fixed price-capacity pair satisfying the drainability guardrail, we establish global convergence to a unique operating point under a checkable step-size condition. Building on this fixed-pair analysis, we further develop an optimizer-agnostic action shield that provides a negative-drift certificate for shielded execution in the residual regime of the dynamic problem and show empirically that it improves safety and robustness for model-free reinforcement learning (RL) in this setting.
\end{abstract}

\section{Introduction}
\label{sec:intro}

In recent years, latency-oriented service-level objectives (SLOs) have become a critical requirement for many Graphics Processing Unit (GPU)-backed services.
Meeting these SLOs under bursty demand often requires keeping spare GPU capacity online, which can substantially increase costs \cite{gandhi2012autoscale}. GPU platforms therefore face a persistent trade-off between service quality and the cost of excess capacity, particularly when capacity adjustments are subject to provisioning frictions.

Most existing SLO-aware serving and autoscaling systems focus on the supply side and treat workload demand as externally given \cite{zhang2020enabling,patke2024queue}.
In multi-tenant GPU cloud platforms, however, demand is often endogenous and shaped jointly by posted prices and congestion.
The posted price acts as a direct demand-side lever, influencing which tenants participate and how aggressively they submit work, while congestion-induced delay feeds back into demand through service quality. Consequently, SLO-aware operation requires managing capacity and demand jointly in a closed loop.

To address this gap, we study a joint price--capacity control problem for a multi-tenant GPU cloud platform in a closed loop, aiming to maximize long-run profit while accounting for latency SLO violations.
We adopt a Stackelberg game formulation \cite{bacsar1998dynamic} to make the demand-side feedback explicit. At each decision epoch, the platform, leader, posts a unit price and a capacity target, and tenants, followers, respond by choosing submission rates based on the posted price and the public congestion signal.
To capture heterogeneity, we approximate the population by finitely many mean-field types \cite{BasarDjehicheTembine2026} with distinct price and delay sensitivities, including delay-insensitive but price-sensitive types that represent opportunistic or speculative demand.

This Stackelberg formulation also reveals a structural failure mode that is absent in exogenous-demand formulations and therefore prevents a direct move to nominal dynamic optimization. That is, under admissible leader actions, the system may become undrainable in the high-congestion regime. 
First, although demand can change rapidly, physical GPU capacity adjusts slowly due to provisioning frictions, such as cold starts \cite{san2023reducing}, setup delays \cite{Gandhi-QUESTA14}, and scale-down inertia \cite{lorido2014review}. 
Second, tenant demand responses to price and congestion are heterogeneous and need not vanish uniformly as congestion grows. Specifically, while delay-sensitive users eventually drop out, delay-insensitive types continue submitting work, sustaining a residual demand floor.
These observations motivate characterizing the admissible leader actions under which the backlog remains in a stable and drainable regime.

Such a structural characterization is a prerequisite for dynamic learning.
Standard model-free reinforcement learning (RL) relies on trial-and-error exploration \cite{sutton1998reinforcement} and may sample actions that push the system into an undrainable regime.
Explicitly identifying the drainable action region thus provides a theoretical guardrail that naturally translates into an action shield \cite{alshiekh2018safe}, eenabling safer exploration of joint pricing and scaling policies while preserving drainability, pursuing long-run profit, and mitigating SLO violations.

\addvspace{0.5\baselineskip}
\noindent\textbf{Related Work.}\hspace{0.5em}
Our work relates to several lines of research on cloud control, pricing, and congestion-sensitive demand. 
SLO-aware serving and autoscaling systems study capacity control with exogenous
demand under provisioning frictions~\cite{zhang2020enabling, patke2024queue}.
GPU pricing systems focus on pricing and charging models for GPU resources \cite{Agora-arxiv25}.
Dynamic pricing under congestion has also been studied in queueing systems~\cite{masuda1999dynamic,ccelik2008dynamic,katta2005pricing}. Related leader--follower formulations appear in large-population network settings \cite{BS-JOTA02}, in congestion pricing with heterogeneous users \cite{stidham2004pricing}, and in cloud pricing and provisioning models \cite{DiValerio,gera2011learning}.
Building on these ideas, we study joint pricing and scaling in SLO-oriented multi-tenant GPU cloud platforms with endogenous demand and provisioning frictions.
\addvspace{0.5\baselineskip}
\noindent\textbf{Contributions.}\hspace{0.5em}
We make the following contributions:
\begin{itemize}[leftmargin=2em]
  \item 
  We formulate joint pricing and scaling in multi-tenant GPU cloud platforms with provisioning frictions as a large-population Stackelberg game problem, yielding an endogenous equilibrium demand map that responds to pricing and congestion.

  \item
   For the reduced fixed price-capacity recursion, we identify an undrainability mechanism in the high-congestion regime and derive a computable drainability guardrail to certify negative backlog drift. Under this guardrail we prove that the fixed-pair recursion admits a unique operating point with global convergence under a checkable step-size condition. 

  \item 
    We translate this drainability guardrail into an optimizer-agnostic action shield for the dynamic problem, establish a  negative-drift certificate for shielded execution, and demonstrate improved safety and robustness for model-free RL.
\end{itemize}

\section{Model and Problem Formulation}
\label{sec:model}

We consider discrete decision epochs $t=0,1,\dots$ with interval $\Delta>0$.
At each epoch, the platform, leader, first observes the current system state $s_t$ and chooses an action
$a_t=(P_t,S_t^{\mathrm{tar}})$ consisting of a unit price and a capacity target.
Meanwhile, the chosen target capacity requests enter a pending setup pipeline and gradually translate into active service capacity, while scale-down remains inertial. 
Given the posted price $P_t$ and a public congestion signal $D_t$, tenants, followers, choose stage-wise submission rates, which generate endogenous demand.
The resulting aggregate demand enters the queueing recursion, interacts with the currently active capacity, and determines the next-state update.

The state is defined as
\(
s_t := (Q_t,S_t,B_t,S_t^{\mathrm{tar,prev}})\in\mathcal S
\),
where $Q_t$ is the backlog, $S_t$ is the active GPU capacity, $B_t$ is the pending setup pipeline, and $S_t^{\mathrm{tar,prev}}$ records the previous capacity target to capture target-change dynamics. Here, the setup pipeline represents requested capacity that has not yet become active.
It decreases through activation and may be partially canceled after target decreases.
The feasible state space is
\(\mathcal S := \mathbb{R}_{+} \times [0,S_{\max}] \times [0,B_{\max}] \times [0,S_{\max}].\)
The action of the leader is defined as \(a_t := (P_t,S_t^{\mathrm{tar}})\in\mathcal A\),
where $P_t$ is the posted unit price and $S_t^{\mathrm{tar}}$ is the capacity target. The feasible action space is
\(\mathcal A := [P_{\min},P_{\max}] \times [0,S_{\max}],  P_{\min}>0\).
We model the public congestion signal using the queue-based delay proxy \cite{kim2018value}:
\begin{equation}
D(Q_t,S_t) := \frac{Q_t}{S_t\mu+\epsilon},
\label{eq:delay_proxy}
\end{equation}
where $\mu>0$ is the service rate per GPU and $\epsilon>0$ is a small baseline constant. For brevity, we denote the delay proxy $D(Q_t,S_t)$ as $D_t$.  

\subsection{Large-Population Demand Response}
\label{sec:model_demand}

We model a large population of tenants partitioned into $K$ mean-field types with density
$\rho_k\ge 0$ and $\sum_{k=1}^K\rho_k=1$.
At each epoch, given the posted price $P_t$ and the public congestion signal $D_t$, we assume that each type chooses its submission rate myopically. The monotone stage-wise response enables the subsequent uniqueness and convergence analysis; delayed or non-monotone tenant adaptation falls outside these guarantees. We further assume that the coupling across types arises only through the shared congestion signal and the resulting queue recursion. Type $k$ is characterized by
\begin{equation*}
\theta_k := (w_k,\mathrm{SLO}_k),\ \ 
w_k>0,\ \ \mathrm{SLO}_k\in(0,\infty],
\end{equation*}
where smaller $\mathrm{SLO}_k$ implies higher delay sensitivity, and $\mathrm{SLO}_k=\infty$ represents delay-insensitive residual types.
We encode delay sensitivity via the SLO-risk coefficient:
\(\delta_k\ :=\ \alpha/\mathrm{SLO}_k, \alpha>0\),
and thus $\mathrm{SLO}_k=\infty$ yields $\delta_k=0$. Given $(P_t,D_t)$, a representative type-$k$ tenant selects a submission rate $\lambda\ge 0$ by solving
\begin{equation*}
\max_{\lambda\ge 0}\ \ 
 w_k\log(1+\lambda) - (P_t+\delta_k D_t)\lambda,
\end{equation*}
where $w_k$ scales the utility of type $k$. The concave $\log(1+\lambda)$ utility captures diminishing returns, and the linear term captures per-unit monetary and delay-related costs \cite{BasarSrikant02}. This strictly concave problem admits the unique solution
\begin{equation}
\lambda_k^{\ast}(P_t,D_t)
=\Big[\frac{w_k}{P_t+\delta_k D_t}-1\Big]_+ ,
\label{eq:best_response}
\end{equation}
where $[x]_+ := \max\{x,0\}$.
Hence $\lambda_k^\ast$ is non-increasing in both $P_t$ and $D_t$, and for $\delta_k=0$ it is independent of congestion. Aggregating across types yields the endogenous equilibrium demand map \(\Lambda^{\ast}(s_t,a_t)
:= \sum_{k=1}^K \rho_k\,
\lambda_k^{\ast}\!\big(P_t, D(Q_t,S_t)\big).\)

\subsection{Queueing and Frictional Provisioning Dynamics}
\label{sec:model_dynamics}

The backlog follows a discretized \emph{fluid recursion} with endogenous demand $\Lambda^\ast(s_t,a_t)$ and service rate $S_t\mu$:
\begin{equation}
Q_{t+1}
=\Big[\,Q_t+\Delta\big(\Lambda^{\ast}(s_t,a_t)-S_t\mu\big)\Big]_+ .
\label{eq:queue_dyn}
\end{equation}
We model capacity provisioning through a pending pipeline: new capacity requests enter the pipeline due to cold starts and setup delays and are gradually activated, while scale-down of active capacity remains inertial.
Let $\omega,\nu\in(0,1]$ and $\chi\in[0,1]$, where $\omega$ is the activation rate of the pipeline, $\nu$ captures scale-down inertia \cite{herlich2016delayed}, and $\chi$ is the fraction of a target-decrease that can cancel pending pipeline.
We decompose target changes into the target-increase component $U_t$ and the target-decrease component $V_t$ as 
\(
U_t := [S^{\mathrm{tar}}_t-S^{\mathrm{tar,prev}}_t]_+, 
V_t := [S^{\mathrm{tar,prev}}_t-S^{\mathrm{tar}}_t]_+.
\)
Here $U_t$ adds requests to the pending pipeline, while $V_t$ triggers cancellation of a portion of pending pipeline if there is any.
The activated capacity in one epoch is
\begin{equation}
A_t := \min\{\omega B_t,\; S_{\max}-S_t\},
\label{eq:At_def}
\end{equation}
and the canceled pending pipeline mass is
\begin{equation}
C_t :=
\min\{\chi V_t,\; B_t-A_t+U_t\}.
\label{eq:C_def}
\end{equation}
Then, pending pipeline and active capacities evolve as
\begin{align}
B_{t+1} &=
\Proj_{[0,B_{\max}]}\!\big(B_t-A_t+U_t-C_t\big),
\label{eq:B_dyn}\\
S_{t+1} &=
\Proj_{[0,S_{\max}]}\!\big(S_t+A_t-\nu [S_t-S^{\mathrm{tar}}_t]_+\big),
\label{eq:S_dyn}
\end{align}
where $\Proj_{[a,b]}(x):= \min\{\max\{x,a\},b\}$ denotes projection onto $[a,b]$. The deterministic fluid model yields nominal drift guarantees; stochastic arrival or service fluctuations would instead require probabilistic or robust safety certificates.
\subsection{The Leader's Control Problem}
\label{sec:model_control}

Given the stage-wise demand response and the induced deterministic state transition,
the leader seeks a policy $\pi=\{\pi_t:\mathcal{S}\to\mathcal{A}\}_{t\ge 0}$ to maximize the infinite-horizon discounted return
\begin{equation*}
J(\pi) := \sum_{t=0}^\infty \gamma^t\, r_t(s_t,a_t),
\ \  a_t = \pi_t(s_t),\ \ \gamma\in(0,1),
\end{equation*}
where $s_{t+1}=f(s_t,a_t)$ is induced by equations \eqref{eq:queue_dyn}--\eqref{eq:S_dyn}.
The one-step reward is defined as follows, with $\Delta$ absorbed into the coefficients:
\begin{align*}
r(s,a)
&:= P\Lambda\!^{\ast}\!(s,a) - c_{\mathrm{op}}S - c_B B
   - \eta_{\mathrm{tar}}\big(S^{\mathrm{tar}}\!\!\!-\!S^{\mathrm{tar,prev}}\big)^2
\notag\\
&\quad - \Phi_0\,\Xi(s,a),
\end{align*}
where $\Phi_0$ is a parameter that scales the SLO-risk penalty, and the SLO-risk component is
\begin{equation*}
\Xi(s,a):=
\sum_{k=1}^K \rho_k\,\lambda_k^{\ast}\!\big(P, D(Q,S)\big)\,\big[D(Q,S)-\mathrm{SLO}_k\big]_+ .
\end{equation*}
The reward balances revenue against operating cost, target-adjustment friction, and SLO-risk penalties. Here $P\Lambda^\ast$ is the revenue term, $c_{\mathrm{op}}S$ charges active GPUs, $c_B B$ charges pending pipeline,
$\eta_{\mathrm{tar}}(\cdot)^2$ discourages frequent target switching, and $\Xi(s,a)$ penalizes excess-delay mass beyond type-specific SLOs.

\section{Fixed-pair backlog recursion and residual regime}
\label{sec:fixed_leader_dynamics}
This section studies the fixed-pair backlog recursion that underlies the guardrail analysis.
Whereas Section~\ref{sec:model} considers the leader dynamically chooses a price--target pair $(P_t,S_t^{\mathrm{tar}})$ under provisioning frictions, here we analyze a reduced fixed-pair model by assuming constant price and service capacity $(P,S)$. 
This reduction lets us focus on the key question of drainability: whether backlog can be driven down from high-backlog states under admissible price and service capacity once delay-sensitive demand drops out.
In this reduced model, $S$ denotes effective service capacity in the induced backlog recursion rather than the target variable of the complete model. 

We first characterize the residual-demand regime, then derive an explicit drainability guardrail, and finally establish operating-point existence, uniqueness, and global convergence under a checkable step-size condition. From the Stackelberg viewpoint, this analysis isolates how a fixed leader price--capacity choice shapes follower-side residual demand, and it will later serve as the local safety certificate behind the online action shield. 

In the fixed-pair model the corresponding recursion of \eqref{eq:queue_dyn} is given by:
\begin{equation}
Q_{t+1}=F_{P,S}(Q_t)
:= 
\Big[\,Q_t+\Delta\big(\Lambda^{\ast}_{P,S}(Q_t)-S\mu\big)\Big]_+,
\label{eq:fixed_backlog_recursion}
\end{equation}
where the stage-wise equilibrium demand is given by
\begin{align*}
    &D(Q):= \frac{Q}{S\mu+\epsilon},\qquad
\lambda_k^{\ast}(Q):= \lambda_k^{\ast}(P,D(Q)),\notag\\
& \quad \qquad \qquad \Lambda^{\ast}_{P,S}(Q):= \sum_{k=1}^K \rho_k \lambda_k^{\ast}(Q).
\end{align*}
Delay-insensitive types with $\delta_k=0$, equivalently $\mathrm{SLO}_k=\infty$, create a residual floor that persists under extreme congestion:
\begin{equation*}
\Lambda_0(P)
:=
\sum_{k:\,\delta_k=0}\rho_k\Big[\frac{w_k}{P}-1\Big]_+ .
\end{equation*}
For delay-sensitive types ($\delta_k>0$), we define the drop-out thresholds as:
\begin{align*}
&Q^{\mathrm{drop}}_k(P,S)
:=
\Big[\,(S\mu+\epsilon)\frac{(w_k-P)}{\delta_k}\Big]_+,\\
&Q^{\mathrm{drop}}(P,S):= \max_{k:\,\delta_k>0}Q^{\mathrm{drop}}_k(P,S).
\label{eq:qdrop_def}
\end{align*}
Then, the following result formalizes the finite-threshold residual regime.
\begin{proposition}
\label{prop:residual_regime}
For any fixed $(P,S)$ and all $Q\ge Q^{\mathrm{drop}}(P,S)$, every delay-sensitive type becomes inactive and
\( 
\Lambda^{\ast}_{P,S}(Q)=\Lambda_0(P).
\) 
\end{proposition}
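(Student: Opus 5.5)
The argument is a direct verification from the closed-form best response \eqref{eq:best_response}; no earlier result beyond that formula is needed. Fix $(P,S)$ and split the sum defining $\Lambda^{\ast}_{P,S}(Q)$ into delay-insensitive types ($\delta_k=0$) and delay-sensitive types ($\delta_k>0$). The plan is to show that each delay-sensitive term vanishes once $Q\ge Q^{\mathrm{drop}}(P,S)$. The remaining terms then coincide with the definition of $\Lambda_0(P)$.

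For the delay-insensitive types, setting $\delta_k=0$ in \eqref{eq:best_response} gives $\lambda_k^{\ast}(Q)=[w_k/P-1]_+$ for every $Q\ge 0$. Here $P\ge P_{\min}>0$, so the expression is well defined. Summing these terms with weights $\rho_k$ yields exactly $\Lambda_0(P)$, independently of $Q$.

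For a delay-sensitive type $k$, note that $\lambda_k^{\ast}(Q)=0$ if and only if $w_k\le P+\delta_k D(Q)$. Since $P+\delta_k D(Q)>0$ and the map $x\mapsto w_k/x-1$ is decreasing on $x>0$, this is equivalent to $D(Q)\ge (w_k-P)/\delta_k$. Using $D(Q)=Q/(S\mu+\epsilon)$ with $S\mu+\epsilon>0$, the condition becomes $Q\ge (S\mu+\epsilon)(w_k-P)/\delta_k$. Two cases complete this step. If $w_k\le P$, the condition holds for every $Q\ge 0$, consistent with $Q^{\mathrm{drop}}_k=0$ under the positive part. If $w_k>P$, the condition is exactly $Q\ge Q^{\mathrm{drop}}_k(P,S)$. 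In both cases $Q\ge Q^{\mathrm{drop}}(P,S)\ge Q^{\mathrm{drop}}_k(P,S)$ implies $\lambda_k^{\ast}(Q)=0$.

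Combining the two parts gives $\Lambda^{\ast}_{P,S}(Q)=\Lambda_0(P)$ for all $Q\ge Q^{\mathrm{drop}}(P,S)$. No step is a genuine obstacle. The points needing care are the bookkeeping at the boundary and the sign conventions: the inequality is non-strict at $Q=Q^{\mathrm{drop}}_k$, where $w_k/(P+\delta_k D)=1$ and the positive part gives $0$, and the positive part in $Q^{\mathrm{drop}}_k$ handles types with $w_k\le P$. If there are no delay-sensitive types, I adopt the convention that the maximum over the empty set is $0$, and the claim reduces to the first part.
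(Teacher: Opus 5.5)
Your proposal is correct and follows the same route as the paper: show that each delay-sensitive best response vanishes once $P+\delta_k D(Q)\ge w_k$, i.e., once $Q\ge Q^{\mathrm{drop}}_k(P,S)$, and then take the maximum over the delay-sensitive types. Your separate treatment of the delay-insensitive terms (which sum to $\Lambda_0(P)$), the case $w_k\le P$, and the boundary equality fills in bookkeeping that the paper's one-line proof leaves implicit.
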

\begin{proof}
If $Q\ge Q^{\mathrm{drop}}_k(P,S)$ and $\delta_k>0$, then $P+\delta_k D(Q)\ge w_k$, hence $\lambda_k^{\ast}(Q)=0$ by \eqref{eq:best_response}; taking the maximum over $\delta_k>0$ yields the result.
\end{proof}

\subsection{Guardrail and Structural Drainability}
Under fixed $(P,S)$, structural drainability means uniformly negative high-backlog drift.
By Proposition~\ref{prop:residual_regime}, this is equivalent to requiring service capacity to dominate the residual floor. Then, we define the drainability guardrail formally as follows. 

\begin{definition}[Drainability guardrail]
\label{def:guardrail}
A fixed pair $(P,S)$ satisfies the drainability guardrail if
\begin{equation}
\Lambda_0(P) < S\mu.
\label{eq:guardrail}
\end{equation}
Moreover, define residual slack as: \(\varepsilon(P,S):= S\mu-\Lambda_0(P).\)
\end{definition}

The next proposition shows that the guardrail yields uniform negative drift once the system enters the residual regime.
\begin{proposition}
\label{prop:neg_drift}
If \eqref{eq:guardrail} holds, then for all $Q\ge Q^{\mathrm{drop}}(P,S)$,
\(\Lambda^{\ast}_{P,S}(Q)-S\mu
= -\,\varepsilon(P,S)
< 0.\)
\end{proposition}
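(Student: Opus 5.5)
The argument is a direct substitution: Proposition~\ref{prop:residual_regime} pins down the demand exactly on the residual regime, and the definition of the residual slack $\varepsilon(P,S)$ then turns the drift into $-\varepsilon(P,S)$.

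\emph{Step 1: evaluate the demand.} Fix $(P,S)$ satisfying \eqref{eq:guardrail} and take any $Q\ge Q^{\mathrm{drop}}(P,S)$. Proposition~\ref{prop:residual_regime} applies verbatim and gives $\Lambda^{\ast}_{P,S}(Q)=\Lambda_0(P)$.

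\emph{Step 2: compute the drift and its sign.} Subtracting $S\mu$ gives
\[
\Lambda^{\ast}_{P,S}(Q)-S\mu=\Lambda_0(P)-S\mu=-\big(S\mu-\Lambda_0(P)\big)=-\varepsilon(P,S).
\]
This identity uses only Definition~\ref{def:guardrail}. Strict negativity is exactly the guardrail \eqref{eq:guardrail}, because $\Lambda_0(P)<S\mu$ is equivalent to $\varepsilon(P,S)>0$.

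\emph{Step 3: check the edge cases.} There is no genuine obstacle here; I would only verify the boundary conventions so the statement is not vacuous or ill-posed.
\begin{itemize}
\item If there are no delay-sensitive types, the maximum defining $Q^{\mathrm{drop}}$ is over an empty set. I would adopt the convention $Q^{\mathrm{drop}}=0$, so the claim holds for all $Q\ge 0$.
\item Each $Q^{\mathrm{drop}}_k$ is finite because $\delta_k>0$ and $\epsilon>0$, so the residual regime is a nonempty half-line.
\item Independence of $Q$ in the residual regime follows because the $\delta_k=0$ best responses in \eqref{eq:best_response} do not depend on $D$.
\end{itemize}

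With these points noted, the proposition follows in two lines.
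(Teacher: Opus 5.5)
Your proof is correct and matches the paper's, which likewise substitutes $\Lambda^{\ast}_{P,S}(Q)=\Lambda_0(P)$ from Proposition~\ref{prop:residual_regime} and reads strict negativity directly off the guardrail \eqref{eq:guardrail}. The edge-case remarks, such as the empty-maximum convention for $Q^{\mathrm{drop}}$, are harmless additions that the paper leaves implicit.
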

\begin{proof}
Immediate from Proposition~\ref{prop:residual_regime} and \eqref{eq:guardrail}.
\end{proof}

The next statement characterizes structural non-drainability when the guardrail is violated.
\begin{proposition}
\label{prop:non_drainable}
If $\Lambda_0(P)>S\mu$, then for any $Q_0\ge 0$,
\(Q_t \ge Q_0 + t\Delta\big(\Lambda_0(P)-S\mu\big), \forall t\ge 0,\)
so the backlog diverges, and no operating point exists. On the other hand, if $\Lambda_0(P)=S\mu$, then the residual-regime drift is zero and $Q_{t+1}=Q_t$ for all $t$ once $Q_t\ge Q^{\mathrm{drop}}(P,S)$.
\end{proposition}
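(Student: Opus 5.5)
The key observation is that the residual floor is a global lower bound on demand, not only its value in the residual regime. For every $Q\ge 0$ and every type $k$ with $\delta_k=0$, \eqref{eq:best_response} gives $\lambda_k^\ast(Q)=[w_k/P-1]_+$. The delay-sensitive types contribute $\lambda_k^\ast(Q)\ge 0$. Summing with weights $\rho_k$ therefore yields
\begin{equation*}
\Lambda^\ast_{P,S}(Q)\ \ge\ \Lambda_0(P)\qquad\text{for all } Q\ge 0 .
\end{equation*}
This is the only structural fact needed for the first claim. Proposition~\ref{prop:residual_regime} is needed only for the equality case.

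For the divergence claim, assume $\Lambda_0(P)>S\mu$ and argue by induction on $t$. Using the bound above,
\begin{equation*}
Q_t+\Delta\big(\Lambda^\ast_{P,S}(Q_t)-S\mu\big)\ \ge\ Q_t+\Delta\big(\Lambda_0(P)-S\mu\big)\ >\ Q_t\ \ge 0,
\end{equation*}
so the positive-part operator in \eqref{eq:fixed_backlog_recursion} is inactive. Hence $Q_{t+1}\ge Q_t+\Delta(\Lambda_0(P)-S\mu)$, and the inductive hypothesis gives $Q_{t+1}\ge Q_0+(t+1)\Delta(\Lambda_0(P)-S\mu)$. Since the increment is strictly positive, $Q_t\to\infty$. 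For non-existence of an operating point, suppose $Q^\star$ satisfies $F_{P,S}(Q^\star)=Q^\star$. Starting the recursion at $Q_0=Q^\star$ gives the constant sequence $Q_t\equiv Q^\star$, which contradicts $Q_1\ge Q^\star+\Delta(\Lambda_0(P)-S\mu)>Q^\star$. Equivalently, one can read this directly from the one-step inequality.

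For the borderline case $\Lambda_0(P)=S\mu$, take $Q_t\ge Q^{\mathrm{drop}}(P,S)$. Proposition~\ref{prop:residual_regime} gives $\Lambda^\ast_{P,S}(Q_t)=\Lambda_0(P)=S\mu$, so the argument of $[\cdot]_+$ equals $Q_t\ge 0$ and $Q_{t+1}=Q_t$. Then $Q_{t+1}\ge Q^{\mathrm{drop}}(P,S)$ again, and induction gives $Q_s=Q_t$ for all $s\ge t$. The drift $\Lambda^\ast_{P,S}(Q)-S\mu$ is identically zero on this region.

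I expect no real obstacle. The only care needed is to check that the positive-part projection never binds in the divergent case, which the strict positivity of the increment guarantees. The other point is to state the lower bound $\Lambda^\ast_{P,S}\ge\Lambda_0$ globally, so that the divergence bound holds from every $Q_0$ and not only from $Q_0\ge Q^{\mathrm{drop}}$.
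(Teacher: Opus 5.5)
Your proposal is correct and follows essentially the same route as the paper. Both use the global lower bound $\Lambda^\ast_{P,S}(Q)\ge\Lambda_0(P)>S\mu$ to get $F_{P,S}(Q)>Q$ on $\mathbb{R}_+$, which yields divergence and rules out a fixed point, and both handle the borderline case $\Lambda_0(P)=S\mu$ via Proposition~\ref{prop:residual_regime}; your explicit induction for the linear lower bound just spells out a step the paper leaves implicit.
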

\begin{proof}
If $\Lambda_0(P)>S\mu$, then $\Lambda_{P,S}^\ast(Q)\ge \Lambda_0(P)>S\mu$ for all $Q\ge0$, so $F_{P,S}(Q)>Q$ on $\mathbb{R}_+$. Hence no feasible operating point exists. 
If $\Lambda_0(P)=S\mu$, then for any $Q_t\ge Q^{\mathrm{drop}}(P,S)$, Proposition~\ref{prop:residual_regime} gives $\Lambda^\ast_{P,S}(Q_t)=\Lambda_0(P)$. Substituting this into \eqref{eq:fixed_backlog_recursion} yields $Q_{t+1}=Q_t$, so the drift is zero.
\end{proof}

\subsection{Operating Point and Global Convergence}
We formally define an operating point as a fixed point of the recursion $F_{P,S}$, that is, any $Q^\ast \ge 0$ satisfying
\begin{equation*}
Q^{\ast}=F_{P,S}(Q^{\ast}).
\end{equation*}
Equivalently, either $Q^{\ast}>0$ and $\Lambda^{\ast}_{P,S}(Q^{\ast})=S\mu$, or $Q^{\ast}=0$ and $\Lambda^{\ast}_{P,S}(0)\le S\mu$. Then, we first establish strict decrease of the demand map above the residual floor.
\begin{lemma}
\label{lem:strict_decrease_residual}
Assume $D(\cdot)$ is strictly increasing.
Fix $(P,S)$ and define
\(
\mathcal{U}(P,S):=\{Q\ge 0:\Lambda^{\ast}_{P,S}(Q)>\Lambda_0(P)\}.
\)
Then $Q\mapsto\Lambda^{\ast}_{P,S}(Q)$ is strictly decreasing on $\mathcal{U}(P,S)$.
Consequently, for any constant $c>\Lambda_0(P)$, the equation $\Lambda^{\ast}_{P,S}(Q)=c$ has at most one solution.
\end{lemma}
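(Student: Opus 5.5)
The plan is to split $\Lambda^{\ast}_{P,S}$ into the congestion-independent residual floor and the delay-sensitive part, and then reduce strict decrease on $\mathcal U(P,S)$ to strict decrease of at least one active delay-sensitive best response. Concretely, write
\[
\Lambda^{\ast}_{P,S}(Q)=\Lambda_0(P)+G(Q),\qquad G(Q):=\sum_{k:\,\delta_k>0}\rho_k\Big[\frac{w_k}{P+\delta_k D(Q)}-1\Big]_+ ,
\]
using \eqref{eq:best_response} and the fact that types with $\delta_k=0$ contribute exactly $\Lambda_0(P)$ regardless of $Q$. Then $\mathcal U(P,S)=\{Q\ge0: G(Q)>0\}$. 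Each summand of $G$ is non-increasing in $Q$, because $D$ is increasing, $P+\delta_k D(Q)>0$, and $[\cdot]_+$ is monotone. Hence $G$ is non-increasing on $\mathbb R_+$, and $\mathcal U(P,S)$ is an interval of the form $[0,\bar Q)$ or $[0,\bar Q]$ (possibly empty). Indeed, if $G(Q_2)>0$ and $Q_1<Q_2$, then $G(Q_1)\ge G(Q_2)>0$.

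Next I prove strict decrease. Take $Q_1<Q_2$ in $\mathcal U(P,S)$. Since $G(Q_2)>0$, some delay-sensitive type $j$ has $\lambda_j^\ast(Q_2)>0$ and $\rho_j>0$, so $w_j/(P+\delta_jD(Q_2))>1$. Because $D$ is strictly increasing and $\delta_j>0$, we get $P+\delta_jD(Q_1)<P+\delta_jD(Q_2)$. The map $x\mapsto w_j/x-1$ is strictly decreasing on $x>0$, and both values of it are positive, so the $[\cdot]_+$ is inactive. This gives $\lambda_j^\ast(Q_1)>\lambda_j^\ast(Q_2)$. Every other summand is weakly larger at $Q_1$, so $G(Q_1)>G(Q_2)$, and therefore $\Lambda^{\ast}_{P,S}(Q_1)>\Lambda^{\ast}_{P,S}(Q_2)$.

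The consequence follows from this. If $c>\Lambda_0(P)$ and $\Lambda^{\ast}_{P,S}(Q)=c$, then $Q\in\mathcal U(P,S)$ by definition. Two distinct solutions would both lie in $\mathcal U(P,S)$, which contradicts strict decrease there.

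The only delicate point is the middle step. A positive total can hide a single active type, and that type must be chosen at the larger point $Q_2$, not at $Q_1$, so that it is active at both points and the positive part does not flatten it. One must also discard types with $\rho_k=0$, which is handled by choosing $j$ with $\rho_j\lambda_j^\ast(Q_2)>0$. Finally, $D$ in \eqref{eq:delay_proxy} is strictly increasing because $S\mu+\epsilon>0$, so the hypothesis holds automatically in the model.
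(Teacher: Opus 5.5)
Your proof is correct and is essentially the paper's argument: pick one delay-sensitive type that contributes positively, get a strict drop for it from strict monotonicity of $D$, and note that every other term is non-increasing. Your explicit requirement $\rho_j>0$ also settles a point the paper leaves implicit.

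One correction to your closing remark: choosing the active type at the smaller point $Q_1$, as the paper does, works just as well. If $\lambda_k^{\ast}(Q_1)>0$, then $\lambda_k^{\ast}(Q_2)=\max\{w_k/(P+\delta_kD(Q_2))-1,\,0\}$ is the maximum of two numbers that are both strictly below $\lambda_k^{\ast}(Q_1)$, so the positive part cannot flatten the drop.
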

\begin{proof}
Take $Q_2>Q_1$ with $Q_1,Q_2\in\mathcal{U}(P,S)$.
Then at least one delay-sensitive type ($\delta_k>0$) is active at $Q_1$; otherwise $\Lambda^{\ast}_{P,S}(Q_1)=\Lambda_0(P)$.
For such a type, strict monotonicity of $D(\cdot)$ gives
$P+\delta_k D(Q_2)>P+\delta_k D(Q_1)$, hence $\lambda_k^{\ast}(Q_2)<\lambda_k^{\ast}(Q_1)$.
All other type contributions are non-increasing, so
$\Lambda^{\ast}_{P,S}(Q_2)<\Lambda^{\ast}_{P,S}(Q_1)$.
\end{proof}
Following Lemma~\ref{lem:strict_decrease_residual}, we provide the result regarding the existence and uniqueness of the operating point as follows. 

\begin{proposition}[Existence and uniqueness]
\label{prop:unique_operating_point}
Assume $D(\cdot)$ is continuous and strictly increasing with $D(0)=0$ and $\lim_{Q\to\infty}D(Q)=\infty$.
If the guardrail condition \eqref{eq:guardrail} holds, then an operating point $Q^{\ast}$ exists and is unique.
Moreover, $Q^\ast=0$ if and only if $\Lambda^\ast_{P,S}(0)\le S\mu$; otherwise $Q^\ast>0$ and satisfies $\Lambda^\ast_{P,S}(Q^\ast)=S\mu$.
\end{proposition}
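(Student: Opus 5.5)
The plan is to reduce the fixed-point problem to a scalar root problem for $g(Q):=\Lambda^{\ast}_{P,S}(Q)-S\mu$ on $\mathbb{R}_+$. We then use continuity and monotonicity of $g$ together with the guardrail \eqref{eq:guardrail} and Lemma~\ref{lem:strict_decrease_residual}.

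First, we record the basic regularity of $g$. Each $\lambda_k^{\ast}(Q)=[w_k/(P+\delta_k D(Q))-1]_+$ is a composition of continuous maps, since $P\ge P_{\min}>0$ keeps the denominator positive and $D$ is continuous. Hence $\Lambda^{\ast}_{P,S}$ is continuous. It is also non-increasing in $Q$, because $D$ is increasing and each $\lambda_k^{\ast}$ is non-increasing in $D$. Moreover $\Lambda^{\ast}_{P,S}(Q)\ge\Lambda_0(P)$ for all $Q$. Finally, since $D(Q)\to\infty$, we can pick a finite $\bar Q$ with $\Lambda^{\ast}_{P,S}(\bar Q)=\Lambda_0(P)$. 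Concretely, $\bar Q=Q^{\mathrm{drop}}(P,S)$ works by Proposition~\ref{prop:residual_regime}, and in general any $Q$ with $P+\delta_k D(Q)\ge w_k$ for all $\delta_k>0$ suffices. Then $g(\bar Q)=-\varepsilon(P,S)<0$ by Proposition~\ref{prop:neg_drift}.

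Existence splits into two cases, using the characterization of operating points stated just before Lemma~\ref{lem:strict_decrease_residual}.
\begin{itemize}
\item If $g(0)\le 0$, then $Q^\ast=0$ is an operating point, since $F_{P,S}(0)=[\Delta g(0)]_+=0$.
\item If $g(0)>0$, then $g(0)>0>g(\bar Q)$, and the intermediate value theorem gives $Q^\ast\in(0,\bar Q)$ with $\Lambda^{\ast}_{P,S}(Q^\ast)=S\mu$. This $Q^\ast$ is a positive fixed point, because $F_{P,S}(Q^\ast)=[Q^\ast]_+=Q^\ast$.
\end{itemize}

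For uniqueness and the dichotomy, we argue as follows.
\begin{itemize}
\item No positive operating point can coexist with $g(0)\le0$. A positive operating point $Q'>0$ requires $\Lambda^{\ast}_{P,S}(Q')=S\mu>\Lambda_0(P)$, so $Q'\in\mathcal{U}(P,S)$. Since $\Lambda^{\ast}_{P,S}$ is non-increasing and stays above $\Lambda_0(P)$, the set $\mathcal U(P,S)$ is an initial interval containing $[0,Q']$. Lemma~\ref{lem:strict_decrease_residual} then gives $\Lambda^{\ast}_{P,S}(0)>\Lambda^{\ast}_{P,S}(Q')=S\mu$, i.e.\ $g(0)>0$. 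So if $g(0)\le 0$, the only operating point is $Q^\ast=0$.
\item If $g(0)>0$, then $0$ is not a fixed point, since $F_{P,S}(0)=\Delta g(0)>0$. Any operating point must therefore be positive and solve $\Lambda^{\ast}_{P,S}(Q)=c$ with $c=S\mu>\Lambda_0(P)$. The "at most one solution" clause of Lemma~\ref{lem:strict_decrease_residual} makes this solution unique.
\end{itemize}
Together these give the "if and only if" statement.

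The main obstacle is the uniqueness step, specifically justifying that $[0,Q']\subset\mathcal U(P,S)$ so that the lemma can be applied between $0$ and $Q'$. This follows from monotonicity: for $Q\le Q'$ we have $\Lambda^{\ast}_{P,S}(Q)\ge\Lambda^{\ast}_{P,S}(Q')>\Lambda_0(P)$.

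A minor point is that the lemma's hypothesis "$D$ strictly increasing" holds for the concrete proxy \eqref{eq:delay_proxy}, since $S\mu+\epsilon>0$. The guardrail is used exactly once, to obtain the sign change $g(\bar Q)<0$ and to ensure $c=S\mu$ lies strictly above the residual floor.
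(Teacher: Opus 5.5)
Your proof is correct and follows essentially the same route as the paper: reduce to the sign of $g(Q)=\Lambda^{\ast}_{P,S}(Q)-S\mu$, split on $g(0)$, use the intermediate value theorem for existence, and use Lemma~\ref{lem:strict_decrease_residual} for uniqueness. The differences are cosmetic: you handle $g(0)\le 0$ in one step via the contrapositive (a positive root forces $g(0)>0$) where the paper splits into $f(0)<0$ and $f(0)=0$, you use the finite point $Q^{\mathrm{drop}}(P,S)$ instead of the limit $Q\to\infty$, and your ``main obstacle'' only needs $0,Q'\in\mathcal U(P,S)$, not the whole interval $[0,Q']$.
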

\begin{proof}
Let $f(Q):= \Lambda^{\ast}_{P,S}(Q)-S\mu$.
Under the stage-wise response in Section~\ref{sec:model}, $\Lambda^{\ast}_{P,S}(Q)$ is continuous and non-increasing in $Q$, hence so is $f$.
By \eqref{eq:guardrail} and Proposition~\ref{prop:residual_regime}, $ \lim_{Q\to\infty} f(Q)=\Lambda_0(P)-S\mu<0$.

If $f(0)\le 0$, then $Q^\ast=0$ is an operating point, and it is unique. 
Indeed, if $f(0)<0$, monotonicity gives $f(Q)<0$ for all $Q>0$, so no positive root exists; if $f(0)=0$, any other root $\widetilde Q>0$ would satisfy $\Lambda^{\ast}_{P,S}(\widetilde Q)=S\mu>\Lambda_0(P)$, hence $\widetilde Q\in\mathcal U(P,S)$, which is impossible by Lemma~\ref{lem:strict_decrease_residual}.

If $f(0)>0$, then continuity and $\lim_{Q\to\infty}f(Q)<0$ imply the existence of some $Q^\ast>0$ with $f(Q^\ast)=0$. Any such root must lie in $\mathcal U(P,S)$ because $S\mu>\Lambda_0(P)$, and Lemma~\ref{lem:strict_decrease_residual} then gives uniqueness.
\end{proof}

Now, to control transients, we next show that trajectories enter a bounded absorbing set in finite time.
\begin{lemma}
\label{lem:absorbing_region}
Assume \eqref{eq:guardrail} and define
\(
\Lambda_{\max}(P):= \sum_{k=1}^K\rho_k\Big[\frac{w_k}{P}-1\Big]_+.
\)
Then $\Lambda^{\ast}_{P,S}(Q)\le\Lambda_{\max}(P)$ for all $Q\ge 0$, and the interval
\(
\mathcal{Q}(P,S):= [0,Q^{\mathrm{drop}}(P,S)+\Delta\Lambda_{\max}(P)]
\) 
is forward invariant for \eqref{eq:fixed_backlog_recursion}, i.e., whenever $Q_t\in \mathcal{Q}(P,S)$, we also have $Q_{t+1}\in \mathcal{Q}(P,S)$. Moreover, every trajectory enters
$\mathcal{Q}(P,S)$ in finite time.
\end{lemma}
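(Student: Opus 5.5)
The argument has three parts: an upper bound on demand, forward invariance of $\mathcal{Q}(P,S)$ by a case split on $Q_t$, and finite-time entry from the uniform negative drift in Proposition~\ref{prop:neg_drift}.

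\emph{Uniform demand bound.} By \eqref{eq:best_response}, each $\lambda_k^\ast(P,D)$ is non-increasing in $D\ge0$. Since $D(Q)\ge 0$, we get $\lambda_k^\ast(Q)\le \lambda_k^\ast(P,0)=[w_k/P-1]_+$. Weighting by $\rho_k$ and summing gives $\Lambda^\ast_{P,S}(Q)\le\Lambda_{\max}(P)$ for all $Q\ge0$.

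\emph{Forward invariance.} Write $\bar Q:=Q^{\mathrm{drop}}(P,S)+\Delta\Lambda_{\max}(P)$ and take $Q_t\in[0,\bar Q]$. The lower bound $Q_{t+1}\ge0$ is automatic from the $[\cdot]_+$ in \eqref{eq:fixed_backlog_recursion}. For the upper bound, split into two cases.
\begin{itemize}
\item If $Q_t< Q^{\mathrm{drop}}(P,S)$, drop the service term and use the demand bound:
\[
Q_{t+1}\le \big[Q_t+\Delta\Lambda_{\max}(P)\big]_+ < \bar Q .
\]
\item If $Q_t\ge Q^{\mathrm{drop}}(P,S)$, Proposition~\ref{prop:neg_drift} applies under \eqref{eq:guardrail} and gives
\[
Q_{t+1}=\big[Q_t-\Delta\varepsilon(P,S)\big]_+\le Q_t\le\bar Q .
\]
\end{itemize}
In both cases $Q_{t+1}\in\mathcal{Q}(P,S)$.

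\emph{Finite-time entry.} Suppose $Q_0>\bar Q$. While $Q_t>\bar Q\ge Q^{\mathrm{drop}}(P,S)$, Proposition~\ref{prop:neg_drift} gives
\[
Q_{t+1}=\max\{Q_t-\Delta\varepsilon,\,0\}.
\]
So the backlog falls by exactly $\Delta\varepsilon$ per step, or jumps to $0$, which already lies in $\mathcal{Q}(P,S)$. Iterating, $Q_t\le \max\{Q_0-t\Delta\varepsilon,0\}$ for as long as the trajectory stays above $\bar Q$. It therefore enters $\mathcal{Q}(P,S)$ within $\lceil (Q_0-\bar Q)/(\Delta\varepsilon(P,S))\rceil$ steps, and the invariance step keeps it there afterwards.

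\emph{Main obstacle.} The only delicate point is the case split in the invariance step. Below the threshold, demand is bounded by $\Lambda_{\max}(P)$ rather than by the residual floor, so a single step can overshoot $Q^{\mathrm{drop}}(P,S)$. The margin $\Delta\Lambda_{\max}(P)$ built into $\bar Q$ is exactly what absorbs this overshoot. Strict positivity of $\varepsilon(P,S)$, supplied by \eqref{eq:guardrail}, is what makes the entry time finite. No continuity or monotonicity of $D$ beyond $D\ge0$ is needed.
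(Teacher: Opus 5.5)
Your proof is correct and follows the same route as the paper: the demand bound from $D(Q)\ge 0$, the split at $Q^{\mathrm{drop}}(P,S)$ (Proposition~\ref{prop:neg_drift} above the threshold, $\Lambda_{\max}(P)$ below it), and finite-time descent at rate $\Delta\varepsilon(P,S)$. You state the case split and the entry-time bound more explicitly than the paper's terse proof, which shows entry into $[0,Q^{\mathrm{drop}}(P,S)]$ rather than directly into $\mathcal{Q}(P,S)$; that difference is inessential.
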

\begin{proof}
By Proposition~\ref{prop:neg_drift}, for $Q_t\ge Q^{\mathrm{drop}}(P,S)$ we have
$\Lambda^{\ast}_{P,S}(Q)-S\mu=-\varepsilon(P,S)<0$, so
$Q_{t+1}\le[Q_t-\Delta\varepsilon(P,S)]_+$.
Hence large backlogs decrease and reach $[0,Q^{\mathrm{drop}}(P,S)]$ in finite time.
Also, for $Q_t\le Q^{\mathrm{drop}}(P,S)$, $P+\delta_kD(Q)\ge P$ implies
$\lambda_k^{\ast}(Q)\le[\frac{w_k}{P}-1]_+$, so
$\Lambda^{\ast}_{P,S}(Q)\le\Lambda_{\max}(P)$.
Therefore, 
$Q_{t+1}\le Q^{\mathrm{drop}}(P,S)+\Delta\Lambda_{\max}(P)$, proving forward invariance and finite-time entry.
\end{proof}

We then impose a checkable step-size condition to guarantee order preservation of the induced map.

\begin{assumption}
\label{ass:stepsize}
There exists $L_\Lambda(P,S)$ such that $\Lambda^{\ast}_{P,S}$ is $L_\Lambda(P,S)$-Lipschitz on the forward-invariant region, and
\begin{equation}
\Delta\,L_\Lambda(P,S)\le 1.
\label{eq:stepsize_cond}
\end{equation}
\end{assumption}

Assumption~\ref{ass:stepsize} is checkable in practice: Lemma~\ref{lem:order_preserving} provides the explicit sufficient bound \eqref{eq:Lipschitz_bound}, which guarantees \eqref{eq:stepsize_cond}.
\begin{lemma}
\label{lem:order_preserving}
Under Assumption~\ref{ass:stepsize}, the map $F_{P,S}$ induced by \eqref{eq:fixed_backlog_recursion} is non-decreasing on the forward-invariant region. Moreover, the following bound is sufficient for \eqref{eq:stepsize_cond}:
\begin{equation}
L_\Lambda(P,S)\;\le\;
\frac{1}{S\mu+\epsilon}\cdot \frac{1}{P^2}\sum_{k:\,\delta_k>0}\rho_k w_k\delta_k.
\label{eq:Lipschitz_bound}
\end{equation}
\end{lemma}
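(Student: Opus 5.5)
The monotonicity claim follows directly from Assumption~\ref{ass:stepsize}, so I would handle it first. Write $G(Q):=Q+\Delta(\Lambda^{\ast}_{P,S}(Q)-S\mu)$, so that $F_{P,S}=[G]_+$. Take $Q_2\ge Q_1$ in the forward-invariant region $\mathcal{Q}(P,S)$ from Lemma~\ref{lem:absorbing_region}. Since $\Lambda^{\ast}_{P,S}$ is non-increasing (Section~\ref{sec:model}) and $L_\Lambda$-Lipschitz there, we have
\[
\Lambda^{\ast}_{P,S}(Q_1)-\Lambda^{\ast}_{P,S}(Q_2)\le L_\Lambda(Q_2-Q_1).
\]
Hence
\[
G(Q_2)-G(Q_1)\ge (1-\Delta L_\Lambda)(Q_2-Q_1)\ge 0
\]
by \eqref{eq:stepsize_cond}. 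The map $x\mapsto[x]_+$ is non-decreasing, so $F_{P,S}(Q_2)\ge F_{P,S}(Q_1)$.

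For the explicit bound \eqref{eq:Lipschitz_bound}, I would bound a Lipschitz constant of $\Lambda^{\ast}_{P,S}$ type by type.

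\emph{Delay-insensitive types.} For $\delta_k=0$, $\lambda_k^\ast$ does not depend on $Q$ and contributes nothing.

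\emph{Delay-sensitive types.} For $\delta_k>0$, $Q\mapsto\lambda_k^\ast(Q)=[g_k(Q)]_+$ with $g_k(Q)=w_k/(P+\delta_k D(Q))-1$. The positive part is $1$-Lipschitz, so it suffices to bound $g_k$. Since $D(Q)=Q/(S\mu+\epsilon)$ is linear, $g_k$ is differentiable on $\mathbb{R}_+$ with
\[
|g_k'(Q)|=\frac{w_k\delta_k}{(S\mu+\epsilon)\,(P+\delta_kD(Q))^2}\le\frac{w_k\delta_k}{(S\mu+\epsilon)P^2},
\]
using $D(Q)\ge0$ and $P\ge P_{\min}>0$. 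By the mean value theorem, $\lambda_k^\ast$ is Lipschitz with this constant.

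\emph{Summing.} Weighting by $\rho_k$ and summing gives the global Lipschitz constant
\[
\bar L:=\frac{1}{S\mu+\epsilon}\cdot\frac{1}{P^2}\sum_{k:\delta_k>0}\rho_kw_k\delta_k .
\]
This holds on all of $\mathbb{R}_+$, hence on the invariant region. One may therefore take $L_\Lambda(P,S)=\bar L$. Whenever $\Delta\bar L\le1$ (the checkable condition the statement has in mind), \eqref{eq:stepsize_cond} holds. The statement's phrase ``the bound is sufficient'' should be read this way: the explicit $\bar L$ is a valid Lipschitz constant, and checking $\Delta$ against it certifies Assumption~\ref{ass:stepsize}.

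Nothing here is a real obstacle. The only care needed is at the kinks of $[\cdot]_+$, where $\lambda_k^\ast$ is not differentiable. I would handle these through the $1$-Lipschitz property of the positive part instead of differentiating $\lambda_k^\ast$ directly. I would also note that monotonicity needs only non-increasingness plus the one-sided Lipschitz bound, not the full Lipschitz property.
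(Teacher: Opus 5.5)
Your proof is correct and follows the paper's argument essentially step for step. It uses the same decomposition $F_{P,S}=[G]_+$ with the Lipschitz estimate $G(Q_2)-G(Q_1)\ge(1-\Delta L_\Lambda)(Q_2-Q_1)$, and the same type-by-type slope bound from $P+\delta_k D(Q)\ge P$. Handling the kinks via the $1$-Lipschitz positive part, and reading ``sufficient'' as requiring $\Delta\bar L\le 1$, only makes explicit what the paper's proof leaves implicit.
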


\begin{proof}
Write $F_{P,S}(Q)=[G(Q)]_+$ with $G(Q) := Q+\Delta(\Lambda_{P,S}^{\ast}(Q)-S\mu)$. Since $x\mapsto [x]_+$ is non-decreasing, it suffices to show that $G$ is non-decreasing. For any $Q_2\ge Q_1$ in the forward-invariant region, Assumption~\ref{ass:stepsize} gives $\Lambda_{P,S}^{\ast}(Q_2)-\Lambda_{P,S}^{\ast}(Q_1)\ge -L_\Lambda(P,S)(Q_2-Q_1)$, hence $G(Q_2)-G(Q_1)\ge (1-\Delta L_\Lambda(P,S))(Q_2-Q_1)\ge 0$ by \eqref{eq:stepsize_cond}. Therefore $F_{P,S}$ is non-decreasing.

By \eqref{eq:best_response}, only types with $\delta_k>0$ contribute to the $Q$-dependence of $\Lambda_{P,S}^{\ast}$. For $\delta_k>0$, combining \eqref{eq:delay_proxy} with \eqref{eq:best_response} gives a $Q$-slope bounded in magnitude by $\frac{1}{S\mu+\epsilon}\cdot\frac{w_k\delta_k}{P^2}$, using $P+\delta_k D(Q)\ge P$. Summing over types with weights $\rho_k$ yields \eqref{eq:Lipschitz_bound}, which is therefore sufficient for \eqref{eq:stepsize_cond}.
\end{proof}

Then, in the following theorem combining Proposition~\ref{prop:unique_operating_point}, Lemma~\ref{lem:absorbing_region}, and Lemma~\ref{lem:order_preserving} yields global convergence.

\begin{theorem}[Global convergence to the operating point]
\label{thm:global_convergence}
Under the conditions of Proposition~\ref{prop:unique_operating_point} and Assumption~\ref{ass:stepsize},
the trajectory of \eqref{eq:fixed_backlog_recursion} satisfies $Q_t\to Q^{\ast}$ for any initial $Q_0\ge 0$.
\end{theorem}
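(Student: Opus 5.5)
The argument is the standard one for monotone scalar maps: eventually $F_{P,S}$ is non-decreasing, so trajectories are monotone, and a bounded monotone sequence converges to a fixed point, which must be $Q^\ast$. First, by Lemma~\ref{lem:absorbing_region}, any trajectory enters the compact interval $\mathcal{Q}(P,S)$ after finitely many steps and stays there. Since finitely many initial steps do not affect the limit, we may assume $Q_0\in\mathcal{Q}(P,S)$. By Lemma~\ref{lem:order_preserving}, $F_{P,S}$ is non-decreasing on $\mathcal{Q}(P,S)$. It is also continuous there, because $\Lambda^\ast_{P,S}$ is Lipschitz by Assumption~\ref{ass:stepsize} and $[\cdot]_+$ is continuous. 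Finally, $Q^\ast\in\mathcal{Q}(P,S)$: either $Q^\ast=0$, or $\Lambda^\ast_{P,S}(Q^\ast)=S\mu>\Lambda_0(P)$, and Proposition~\ref{prop:residual_regime} then forces $Q^\ast<Q^{\mathrm{drop}}(P,S)$.

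Next we show the trajectory is monotone. Compare $Q_1=F_{P,S}(Q_0)$ with $Q_0$. If $Q_1\ge Q_0$, applying the non-decreasing map $F_{P,S}$ repeatedly gives $Q_{t+1}\ge Q_t$ for all $t$ by induction. Symmetrically, if $Q_1\le Q_0$, the sequence is non-increasing. In either case $(Q_t)$ is monotone and stays in the bounded interval $\mathcal{Q}(P,S)$, so it converges to some $\bar Q\in\mathcal{Q}(P,S)$. Passing to the limit in $Q_{t+1}=F_{P,S}(Q_t)$ and using continuity gives $\bar Q=F_{P,S}(\bar Q)$, so $\bar Q$ is an operating point. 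By the uniqueness part of Proposition~\ref{prop:unique_operating_point}, $\bar Q=Q^\ast$.

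The hard step is making sure the order-preservation argument is legitimate on the region where the trajectory actually lives. Lemma~\ref{lem:order_preserving} requires both points to lie in the forward-invariant region, and the induction above uses exactly this, since all iterates after entry lie in $\mathcal{Q}(P,S)$. A second check concerns the residual regime $Q\ge Q^{\mathrm{drop}}(P,S)$ inside $\mathcal{Q}(P,S)$. There $G(Q)=Q-\Delta\varepsilon(P,S)$ is increasing, which is consistent with monotonicity. Continuity across $Q^{\mathrm{drop}}(P,S)$ holds because each $\lambda_k^\ast$ is continuous in $Q$.

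An alternative, which avoids the case split on the sign of $Q_1-Q_0$, is a sandwich argument. The trajectory from $0$ is non-decreasing, since $F_{P,S}(0)\ge0$, and is bounded above by $Q^\ast$, because $F_{P,S}$ is monotone and $F_{P,S}(Q^\ast)=Q^\ast$. The trajectory from the upper endpoint $\bar q$ of $\mathcal{Q}(P,S)$ is non-increasing, since $F_{P,S}(\bar q)\le\bar q$ by invariance, and is bounded below by $Q^\ast$. Both therefore converge to fixed points, and by uniqueness both limits equal $Q^\ast$. Monotonicity of $F_{P,S}$ traps any trajectory starting in $\mathcal{Q}(P,S)$ between these two, so it also converges to $Q^\ast$. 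I would present the simpler monotone-sequence argument and mention the sandwich only as an alternative.
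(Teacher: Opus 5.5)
Your proposal is correct and follows essentially the same route as the paper's proof. The shared steps are: finite-time entry into the forward-invariant interval $\mathcal{Q}(P,S)$ (Lemma~\ref{lem:absorbing_region}), order preservation of $F_{P,S}$ there (Lemma~\ref{lem:order_preserving}) so that the tail is monotone and bounded, continuity to identify the limit as a fixed point, and uniqueness from Proposition~\ref{prop:unique_operating_point}. Your extra check that $Q^\ast\in\mathcal{Q}(P,S)$ and the sandwich argument are both valid, but the main argument does not need them.
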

\begin{proof}
By Lemma~\ref{lem:absorbing_region}, there exists $T<\infty$ such that
$Q_t\in\mathcal Q(P,S)$ for all $t\ge T$, where $\mathcal Q(P,S)$ is bounded and forward invariant.
Under the stage-wise response in Section~\ref{sec:model}, the map $F_{P,S}$ is continuous, and by Lemma~\ref{lem:order_preserving}, it is non-decreasing on $\mathcal Q(P,S)$.

If $Q_{T+1}\ge Q_T$, then
\(Q_{T+2}=F_{P,S}(Q_{T+1})\ge F_{P,S}(Q_T)=Q_{T+1}.\)
If $Q_{T+1}< Q_T$, then similarly
\(Q_{T+2}=F_{P,S}(Q_{T+1})\le F_{P,S}(Q_T)=Q_{T+1}.\)
Hence, in either case, by induction the tail sequence $\{Q_t\}_{t\ge T}$ is monotone.
Since $\mathcal Q(P,S)$ is bounded and forward invariant, $\{Q_t\}_{t\ge T}$ is bounded in $\mathcal Q(P,S)$, and therefore converges to some $\bar Q\in\mathcal Q(P,S)$.
Continuity of $F_{P,S}$ gives $\bar Q=F_{P,S}(\bar Q)$, so $\bar Q$ is an operating point.

By Proposition~\ref{prop:unique_operating_point}, the operating point is unique, and therefore $\bar Q=Q^\ast$ in either case.
Thus $Q_t\to Q^\ast$ for any initial $Q_0\ge 0$.
\end{proof}

\section{Guardrail-Aware Action Shield for Dynamic Control}
\label{sec:dynamic_control}
Theorem~\ref{thm:global_convergence} provides a local drainability certificate for fixed price--capacity pairs $(P,S)$. In the full  problem, however, the leader chooses a sequence of price--target actions $a_t=(P_t,S_t^{\mathrm{tar}})$ under provisioning frictions, and hence the fixed-pair analysis in Section~\ref{sec:fixed_leader_dynamics} does not by itself yield an end-to-end stability result for the closed-loop dynamic control problem posed in Section~\ref{sec:model}. 
We therefore use this certificate to filter dynamic price--target actions under provisioning frictions. 
This section develops an online action shield for dynamic execution and establishes its residual-regime negative-drift certificate.
\subsection{Safe-Price Map and Effective-Capacity Proxy}
To improve the robustness of the online shield, we enforce a safety margin $\zeta\in(0,1)$:
\(\Lambda_0(P)\ \le\ (1-\zeta)\,S\mu.\) For $S\in[0,S_{\max}]$, we define the safe-price map as
\begin{equation*}
\Psafe(S)\!:=\!
\inf\Big\{P\!\in\![P_{\min},P_{\max}] : \Lambda_0(P)\!\le\! (1-\zeta)S\mu\Big\},
\end{equation*}
with $\Psafe(S)=+\infty$ if the set is empty.
Since $\Lambda_0(P)$ is non-increasing in $P$, $\Psafe(S)$ is non-increasing in $S$.


To evaluate whether a proposed target can support the margin guardrail under provisioning frictions, we construct a one-step effective-capacity proxy directly from the dynamics in \eqref{eq:At_def}--\eqref{eq:S_dyn}. Given
a state $s=(Q,S,B,S^{\mathrm{tar,prev}})$ and a candidate target $S^{\mathrm{tar}}$, define
\begin{align}
A(s) :=& \min\{\omega B,\; S_{\max}-S\},
\nonumber\\
S^{\mathrm{eff}}(s,S^{\mathrm{tar}})
:=
&\Proj_{[0,S_{\max}]}\!\Big(S + A(s) - \nu [S-S^{\mathrm{tar}}]_+\Big).
\nonumber\\
S^{\mathrm{cert}}(s,S^{\mathrm{tar}})
:=&\min\{S,S^{\mathrm{eff}}(s,\;S^{\mathrm{tar}})\}\nonumber
\end{align}
Here $A(s)$ is the one-step activation term inherited from \eqref{eq:At_def}, while
$S^{\mathrm{eff}}(s,S^{\mathrm{tar}})$ is the resulting one-step active capacity induced by the activation and scale-down terms in \eqref{eq:S_dyn}. 
Taking the minimum ensures that the certified capacity $S^{\mathrm{cert}}(s,S^{\mathrm{tar}})$ is conservative under scale-down.

\subsection{Guardrail-Aware Action Shield}
We now define an optimizer-agnostic online action shield that filters each proposed action before execution \cite{alshiekh2018safe}. The shield follows a minimal-intervention principle: when the proposed target can be certified safe, it leaves the target unchanged and adjusts only the price, raising it just enough to satisfy the guardrail. When such certification is infeasible, the shield instead defaults to an emergency action as a best-effort recovery move. For a candidate action $a=(P,S^{\mathrm{tar}})\in\mathcal{A}$, write $S^{\mathrm{cert}}$ as shorthand for $S^{\mathrm{cert}}(s,S^{\mathrm{tar}})$. 
The resulting action shield is then defined by
\begin{equation*}
\SafeFilter(s,a)\! :=\!
\begin{cases}
\!\big(\!\max\{P, \Psafe(S^{\mathrm{cert}})\},S^{\mathrm{tar}}\big),\!\!\!
&\!\! \text{if } \Psafe(S^{\mathrm{cert}})\!\!<\!\!+\infty,\\[1mm]
a_{\mathrm{emg}}\!:= \!(P_{\max}, S_{\max}),
&\!\! \text{otherwise.}
\end{cases}
\end{equation*}

\begin{proposition}[Dynamic negative-drift certificate]
Consider a state $s_t$ and a proposed action $a_t$ such that
$P_{\mathrm{safe}}(S_t^{\mathrm{cert}})<+\infty$, where
$S_t^{\mathrm{cert}}
:=S^{\mathrm{cert}}(s_t,S_t^{\mathrm{tar}})$.
Let $\widetilde a_t=\Gamma(s_t,a_t)$ and denote its executed
price by $\widetilde P_t$.
If $Q_t\ge Q^{\mathrm{drop}}(\widetilde P_t,S_t)$, then
\(
Q_{t+1}\le
[Q_t-\Delta\zeta S_t^{\mathrm{cert}}\mu]^+.
\)
Thus, whenever $S_t^{\mathrm{cert}}>0$, the shielded action
induces strictly negative backlog drift in the residual regime.
\end{proposition}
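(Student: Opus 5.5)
The plan is to reduce the statement to the residual-regime identity of Proposition~\ref{prop:residual_regime}, combined with the defining inequality of the safe-price map, and then use monotonicity of $[\cdot]_+$. Since $\Psafe(S_t^{\mathrm{cert}})<+\infty$, the shield is in its first branch. It therefore executes $\widetilde P_t=\max\{P_t,\Psafe(S_t^{\mathrm{cert}})\}\ge \Psafe(S_t^{\mathrm{cert}})$, and the backlog recursion \eqref{eq:queue_dyn} at epoch $t$ is driven by the pair $(\widetilde P_t,S_t)$.

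The first step is to show that the infimum defining $\Psafe(S)$ is attained, so that $\Lambda_0(\Psafe(S))\le(1-\zeta)S\mu$ whenever the set is nonempty. I would argue this from the explicit form $\Lambda_0(P)=\sum_{k:\delta_k=0}\rho_k[w_k/P-1]_+$. This function is continuous on $[P_{\min},P_{\max}]$ (because $P_{\min}>0$), so the sublevel set $\{P:\Lambda_0(P)\le(1-\zeta)S\mu\}$ is closed in a compact interval and contains its infimum. Since $\Lambda_0$ is non-increasing, $\widetilde P_t\ge\Psafe(S_t^{\mathrm{cert}})$ then gives
\[
\Lambda_0(\widetilde P_t)\le\Lambda_0(\Psafe(S_t^{\mathrm{cert}}))\le(1-\zeta)S_t^{\mathrm{cert}}\mu .
\]
I expect this attainment argument to be the only genuinely non-routine point; the rest is substitution.

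The second step uses the hypothesis $Q_t\ge Q^{\mathrm{drop}}(\widetilde P_t,S_t)$. The fixed-pair demand map with $(P,S)=(\widetilde P_t,S_t)$ coincides with $\Lambda^\ast(s_t,\widetilde a_t)$, since both use $D(Q_t,S_t)$. Proposition~\ref{prop:residual_regime} therefore yields $\Lambda^\ast(s_t,\widetilde a_t)=\Lambda_0(\widetilde P_t)$.

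Substituting into \eqref{eq:queue_dyn}, and using that $[\cdot]_+$ is non-decreasing, gives
\[
Q_{t+1}=\big[Q_t+\Delta(\Lambda_0(\widetilde P_t)-S_t\mu)\big]_+\le\big[Q_t+\Delta((1-\zeta)S_t^{\mathrm{cert}}\mu-S_t\mu)\big]_+ .
\]
By definition $S_t^{\mathrm{cert}}=\min\{S_t,S^{\mathrm{eff}}\}\le S_t$, so $(1-\zeta)S_t^{\mathrm{cert}}\mu-S_t\mu\le-\zeta S_t^{\mathrm{cert}}\mu$. This gives the claimed bound $Q_{t+1}\le[Q_t-\Delta\zeta S_t^{\mathrm{cert}}\mu]_+$.

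The final sentence of the statement follows directly. If $S_t^{\mathrm{cert}}>0$, the drift $\Lambda^\ast-S_t\mu\le-\zeta S_t^{\mathrm{cert}}\mu$ is strictly negative, so any positive backlog strictly decreases, or is driven to zero.
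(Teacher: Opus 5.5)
Your proof is correct and follows the paper's argument: use the residual-regime identity $\Lambda^\ast(s_t,\widetilde a_t)=\Lambda_0(\widetilde P_t)$, then the shield inequality $\Lambda_0(\widetilde P_t)\le(1-\zeta)S_t^{\mathrm{cert}}\mu$, then $S_t^{\mathrm{cert}}\le S_t$, and substitute into the backlog recursion. Your continuity-based argument that the infimum defining $P_{\mathrm{safe}}$ is attained spells out a step the paper compresses into ``by the shield definition.''
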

\begin{proof}
In the residual regime,
$\Lambda^*(s_t,\widetilde a_t)=\Lambda_0(\widetilde P_t)$.
By the shield definition,
$\Lambda_0(\widetilde P_t)\le
(1-\zeta)S_t^{\mathrm{cert}}\mu$, while
$S_t^{\mathrm{cert}}\le S_t$. Hence
$\Lambda^*(s_t,\widetilde a_t)-S_t\mu
\le-\zeta S_t^{\mathrm{cert}}\mu$.
Substitute into the backlog recursion gives the result.
\end{proof}
Since the shield acts only at execution time, it can be placed on top of a wide range of planning or learning methods without changing their internal logic.  The shield itself requires $O(K_0)$ work per candidate action, where $K_0$ is the number of residual tenant types, and can be combined with scalable RL in larger state--action spaces.
\subsection{Filtered Control Problem and Truncation Bound}
Let the executed actions be $\tilde a_t=\SafeFilter(s_t,a_t)$ for proposed $a_t\in\mathcal{A}$, and let dynamics be $s_{t+1}=f(s_t,\tilde a_t)$. The filtered Bellman equation is
\begin{equation*}
V^{\ast}(s)\!=\!\max_{a\in\mathcal{A}}
\Big\{ r\big(s,\SafeFilter(s,a)\big) + \gamma V^{\ast}\big(f(s,\SafeFilter(s,a))\big)\Big\},\ \forall s\in\mathcal{S}_{grid}.
\end{equation*}
On the bounded discretized planning domain used for computation, the filtered Bellman operator is a contraction, and therefore admits a unique optimal value function \cite{bertsekas2012dynamic}. In Section~\ref{sec:experiments}, we compute the planning benchmark by applying grid-based value iteration (VI) on this guarded discretization.
For finite-horizon planning, let $V_t^{(H)}$ denote the value function of backward dynamic programming with horizon $H$ (DP-$H$) at stage $t$, with terminal condition $V_H^{(H)}\equiv 0$. The standard geometric truncation bound then yields
\begin{align}
&\big|V^{\ast}(s)-V_0^{(H)}(s)\big|
\le \frac{\gamma^H}{1-\gamma}\,R_{\max},
\nonumber \\
&R_{\max} := \!\!\sup_{(s,a)\in \mathcal{S}_{\mathrm{grid}}\times \mathcal{A}_{\mathrm{grid}}}\!\!
\big|r(s,\SafeFilter(s,a))\big|<\infty.
\nonumber 
\end{align}
\section{Experiments}
\label{sec:experiments}

Having constructed the guarded control interface in Section~\ref{sec:dynamic_control}, we now evaluate its numerical behavior.
The conducted experiments address two questions. First, in Section~\ref{sec:exp_planning}, we test whether the guarded planning interface provides a reliable approximation to the VI benchmark both on the guarded discretization used for planning and under off-grid execution on the continuous dynamics.
Second, in Section~\ref{sec:exp_rl}, we evaluate whether the proposed guardrail-aware action shield improves safety and learning stability in model-free RL for the pricing-scaling task under both steady and bursty demand.

\begin{table}[t]
\vspace{5pt}
\centering
\caption{Baseline configuration.}
\label{tab:baseline}
\small
\setlength{\tabcolsep}{4pt}
\renewcommand{\arraystretch}{1.10}
\begin{tabular}{@{}p{0.43\linewidth}p{0.49\linewidth}@{}}
\toprule
Parameter(s) & Value(s) \\ \midrule
\multicolumn{2}{@{}l}{\textit{\textbf{Time, service, and bounds}}} \\
$\Delta,\ \gamma$ & $1.0,\ 0.99$ \\
$\mu,\ \epsilon$ & $0.80,\ 0.10$ \\
$Q$,\ $S$,\ $B$ & $[0,50],\ [0,4],\ [0,3]$ \\
$P$,\ $S^{\mathrm{tar}}$ & $[1,6],\ [0,4]$ \\
\addlinespace[2pt]
\multicolumn{2}{@{}l}{\textit{\textbf{Provisioning and reward}}} \\
$\omega,\ \nu,\ \chi$ & $0.2,\ 0.2,\ 0.3$ \\
$c_{\mathrm{op}},\ c_B$ & $1,\ 0.50$ \\
$\eta_{\mathrm{tar}},\ \Phi_0$ & $0.4,\ 8$ \\
\addlinespace[2pt]
\multicolumn{2}{@{}l}{\textit{\textbf{Safety margin}}} \\
$\zeta$ & $0.05$ \\
\bottomrule
\end{tabular}
\end{table}

\subsection{Baseline Setup and Evaluation Protocol}
\label{sec:exp_common}

\noindent\textbf{Baseline setup.}
All experiments share the same baseline instance and default guarded discretization.
Table~\ref{tab:baseline} summarizes system parameters, and Table~\ref{tab:tenant_types} specifies the heterogeneous tenant mix, including delay-sensitive types and delay-insensitive types.
n practice, the demand parameters $(w_k,\delta_k,\rho_k)$ can be calibrated from historical submission rates under observed prices and congestion, while $\mu$ and $(\omega,\nu,\chi)$ can be estimated from service-throughput and scaling traces. The safety margin $\zeta$ can further absorb moderate calibration errors in the residual-demand estimate.

We use the mixed-resolution backlog grid
$Q=\{0,0.5,1,\dots,5,6,7,\dots,50\}$,
uniform grids for state variables $S_t$, $B_t$, and $S^{\mathrm{tar,prev}}_t$ with step size $0.5$,
and uniform action grids for $S^{\mathrm{tar}}$ (step $0.5$) and $P$ (step $0.2$).

For each grid pair $(s,a)$, we first form the executed action $\tilde a=\SafeFilter(s,a)$, evaluate the one-step dynamics and reward, and project the resulting next state to the nearest grid point. This yields a deterministic guarded transition table and reward table.

\noindent\textbf{Evaluation protocol.}
For planning comparisons, we compute an infinite-horizon benchmark $(V^\infty,\pi^\infty)$ by VI and compare it with finite-horizon backward DP-$H$ on the same guarded transition and reward tables. For off-grid evaluation, these on-grid policies are executed on the continuous dynamics via nearest-neighbor lookup without per-step state projection \cite{bertsekas1995neuro}. We use the following metrics.

On-grid approximation quality is measured by the relative value gap at an initial state $s_0$,
\begin{equation*}
\!\!\!\!\mathrm{Gap}_{\mathrm{rel}}(H;s_0)
\!:=\!
\frac{|V^\infty(s_0)-V_0^{(H)}(s_0)|}
{\max\{|V^\infty(s_0)|,\varepsilon\}},\; \varepsilon=10^{-12}.
\end{equation*}
Here $V^\infty(s_0)$ denotes the infinite-horizon benchmark value at the initial state $s_0$ computed by VI, and $V_0^{(H)}(s_0)$ denotes the stage-$0$ value returned by finite-horizon backward DP with horizon $H$ at the same initial state.

Off-grid execution quality is measured by the mean relative return gap
\begin{equation*}
\mathrm{Gap}_{\mathrm{rel}}^{\mathrm{off}}(H)
:=
\mathbb{E}\!\left[
\frac{|J^{\mathrm{VI}}-J^{\mathrm{DP}(H)}|}
{\max\{|J^{\mathrm{VI}}|,\varepsilon\}}
\right],
\quad \varepsilon=10^{-12},
\end{equation*}
where the expectation is over perturbed initial states. 
Here $J^{\mathrm{VI}}$ and $J^{\mathrm{DP}(H)}$ denote the realized discounted returns obtained by executing the VI and the DP-$H$ policy on the continuous off-grid dynamics from the same initial state.

For model-free RL experiments, we also report the number of unsafe executed steps and the cumulative crash count. An executed step is counted as unsafe if the realized action violates the guardrail condition $\Lambda_0(\widetilde P_t)>(1-\zeta)S_t^{\mathrm{cert}}\mu$, and a crash occurs when the backlog reaches the truncation boundary $Q_{\max}$.

\begin{table}[t]
\vspace{-1mm}
\centering
\caption{Tenant-type parameters.}
\label{tab:tenant_types}
\small
\setlength{\tabcolsep}{4.5pt}
\renewcommand{\arraystretch}{1.10}
\begin{tabular}{@{}ccccc@{}}
\toprule
Type $k$ & $w_k$ & SLO$_k$ & $\delta_k$ & $\rho_k$ \\ \midrule
1 & 24 & 4 & 4.5 & 0.10 \\
2 & 18 & 6 & 2.5 & 0.15 \\
3 & 16 & 10 & 1.5 & 0.15 \\
4 & 15 & 15 & 1.1 & 0.10 \\
5 & 12 & $\infty$ & 0 & 0.30 \\
6 & 10 & $\infty$ & 0 & 0.20 \\
\bottomrule
\end{tabular}
\vspace{-2mm}
\end{table}
\subsection{Planning Interface Validation}
\label{sec:exp_planning}
\begin{figure}[b]
\centering
\includegraphics[width=0.88\columnwidth]{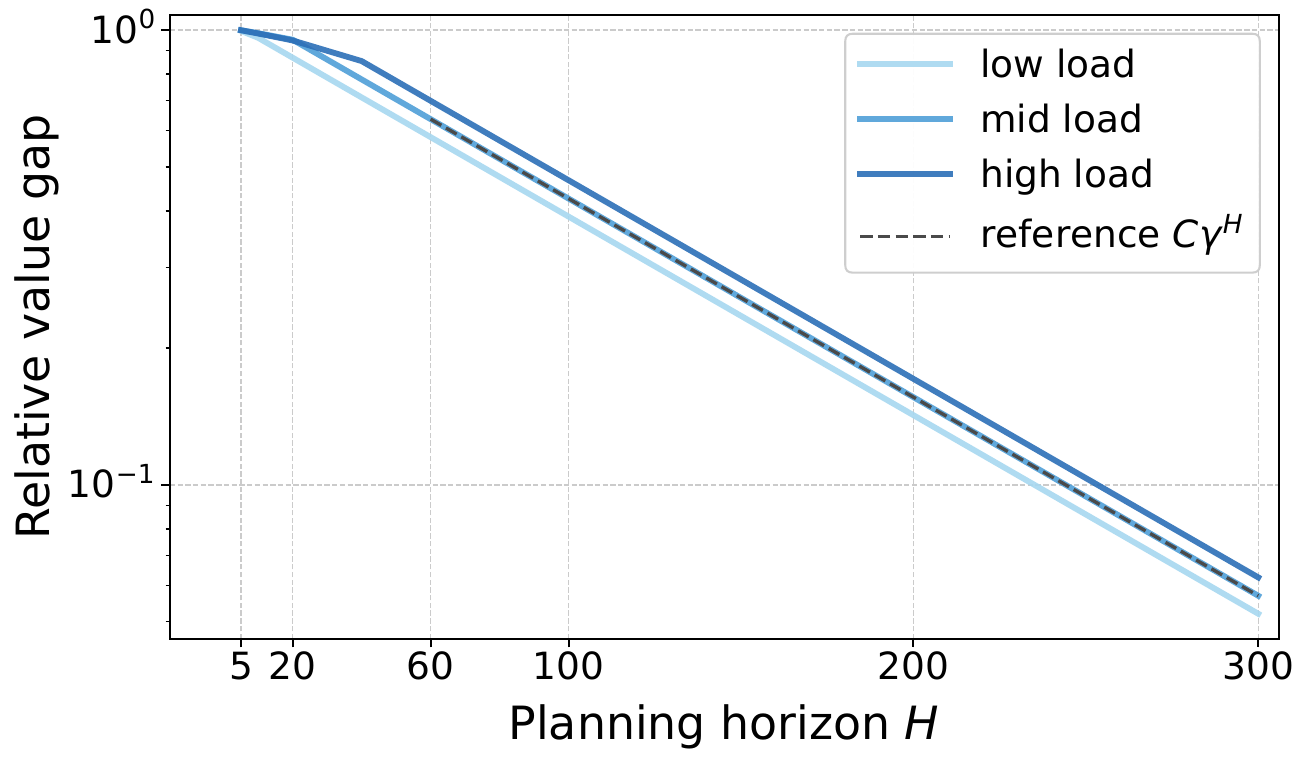}
\caption{Relative value gap versus planning horizon $H$; the dashed line is the geometric reference.}
\label{fig:exp1}
\end{figure}

We first test whether finite-horizon backward DP provides a reliable approximation to the VI benchmark on the guarded discretization.
To isolate backlog effects, we consider three representative initial loads, $q_{\mathrm{low}}=0.05Q_{\max}$, $q_{\mathrm{mid}}=0.5Q_{\max}$, and $q_{\mathrm{high}}=0.98Q_{\max}$, and set $s_0^\ell=(Q_0^\ell,S_0,B_0,S_0^{\mathrm{tar,prev}})=(q_\ell,0,0,0)$ for $\ell\in\{\mathrm{low},\mathrm{mid},\mathrm{high}\}$.
For each $s_0^{\ell}$, we compute the VI benchmark value $V^\infty(s_0^{\ell})$ and the finite-horizon value $V_0^{(H)}(s_0^\ell)$ on the same guarded table, and report the relative gap $\mathrm{Gap}_{\mathrm{rel}}(H;s_0^{\ell})$.
Fig.~\ref{fig:exp1} shows that $\mathrm{Gap}_{\mathrm{rel}}(H;s_0^{\ell})$ decreases monotonically with $H$ and exhibits an approximately geometric decay across all three initial-state backlogs.
The dashed reference curve is of the form $C\gamma^{H}$ anchored at $H_0=60$, illustrating that the observed convergence rate is consistent with the standard geometric truncation behavior in discounted DP. This supports the use of moderate finite horizons in the guarded planning interface.

\begin{figure}[t]
\centering
\hspace*{-5mm}
\includegraphics[width=0.88\columnwidth]
{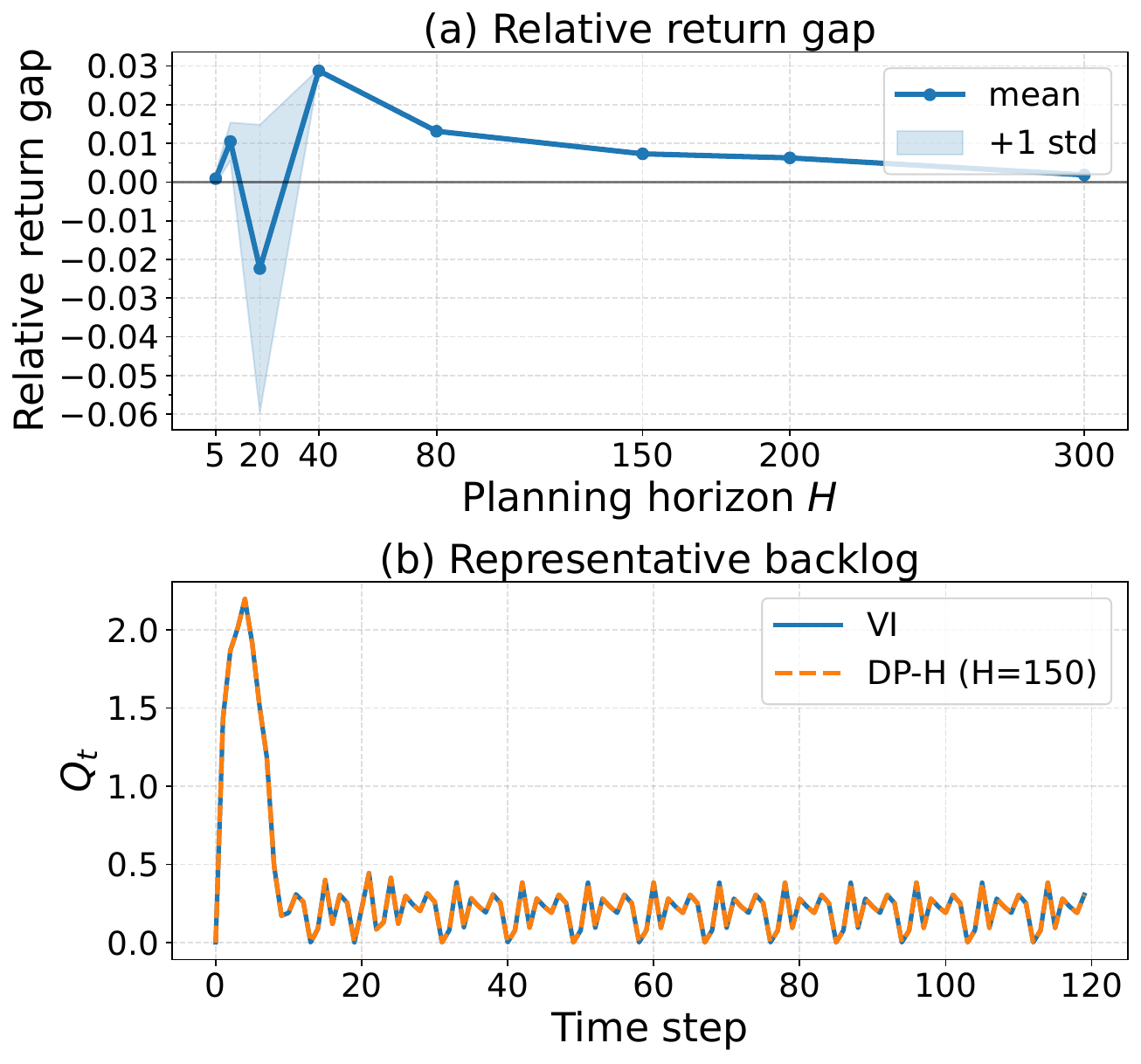}
\caption{
    (a) Relative off-grid return gap versus planning horizon $H$. 
    (b) Representative off-grid backlog trajectory for the VI and DP-$H$ policy.
    }
\label{fig:exp2_offgrid}

\end{figure}

We next assess how well on-grid planned policies transfer to the continuous off-grid dynamics, using the VI policy computed on the same guarded discretization as the reference for the resulting off-grid performance gap.
For each horizon $H$, we synthesize a DP-$H$ policy on the guarded table and execute it on the continuous system using nearest-neighbor lookup without per-step state projection; the VI policy $\pi^\infty$ is evaluated under the same off-grid protocol.
We report the mean relative return gap $\mathrm{Gap}_{\mathrm{rel}}^{\mathrm{off}}(H)$, where the expectation is over $n_{\mathrm{samp}}=30$ initial states with $Q_0$ uniformly perturbed within $\pm 5$ (clipped to $[0,Q_{\max}]$) and all other coordinates fixed.
For representative trajectory plots, we use the reference initial state $s_0=(0,0,0,0)$ over the first $120$ steps.
Fig.~\ref{fig:exp2_offgrid} shows that $\mathrm{Gap}_{\mathrm{rel}}^{\mathrm{off}}(H)$ is larger only at very short horizons and decays quickly, remaining small for moderate $H$, indicating that truncation bias dominates the short-$H$ regime.
Consistently, for a representative long horizon $H=150$, DP-$H$ and VI yield nearly identical off-grid backlog trajectories over the first $120$ steps.

\subsection{Guardrail Effects in Model-Free RL}
\label{sec:exp_rl}
\begin{figure}[t]
\centering
\includegraphics[width=\columnwidth]{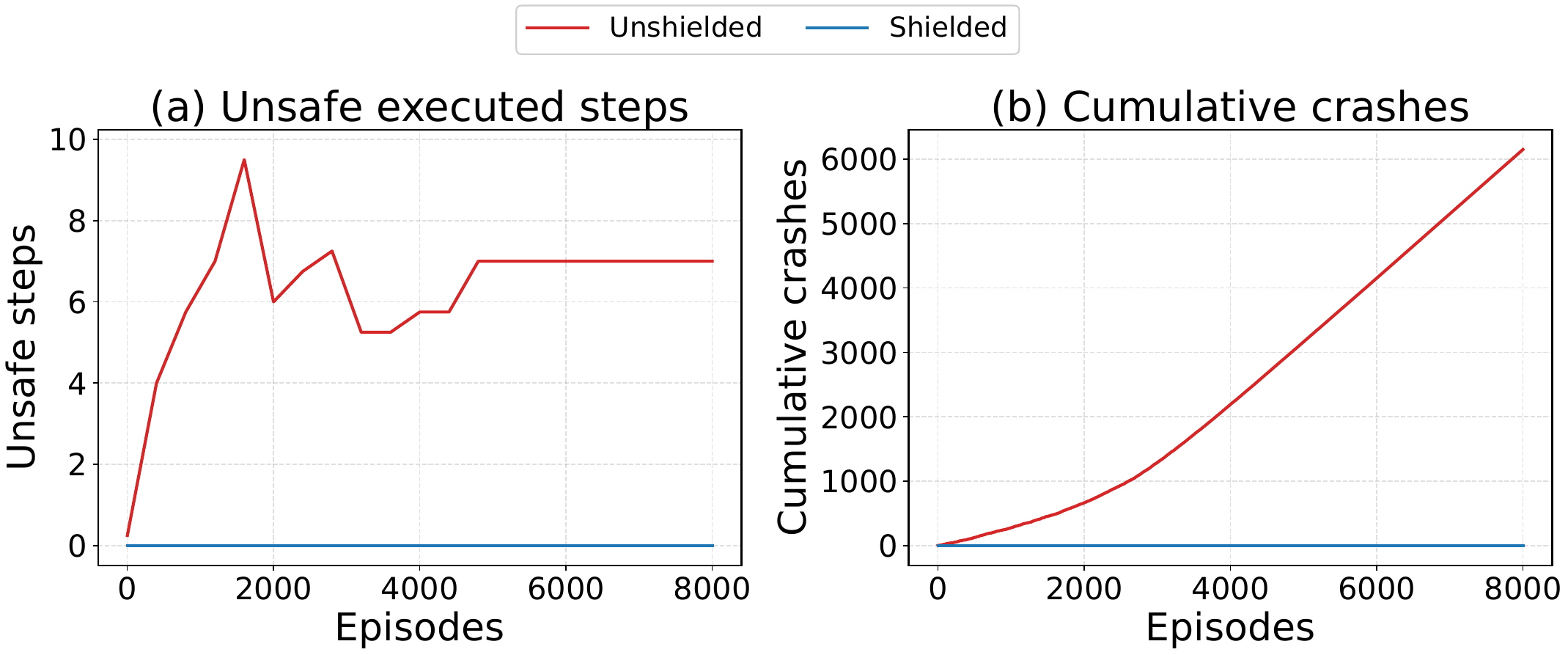}
\caption{Guardrail ablation in tabular Q-learning.}
\label{fig:exp3_learning}
\vspace{-4mm}
\end{figure}

We first compare unshielded and shielded tabular Q-learning to isolate the action shield as an execution-time safety layer.  Training uses $8000$ episodes of length $200$, each with step size $\alpha=0.15$ and an $\epsilon$-greedy schedule decaying from $1.0$ to $0.05$ over the first $80\%$ of training.
Every $400$ episodes, we run greedy evaluation rollouts of length $250$ and record the mean number of unsafe executed steps; we also track the cumulative number of training episodes that crash by hitting the truncation boundary. As shown in Fig.~\ref{fig:exp3_learning}, the guardrail-aware action shield drives unsafe executions to zero and substantially suppresses crash accumulation, indicating that it blocks unsafe exploratory actions that would trigger rapid backlog growth. Without the action shield, exploration repeatedly enters a residual-demand regime, leading to frequent crashes. Moreover, the shielded and unshielded policies achieve average evaluation returns of 339.04 and 30.64, respectively. These results corroborate the drainability analysis and highlight the action shield as an effective mechanism for safer exploration.

\begin{figure}[tb]
\centering
\includegraphics[width=0.99\columnwidth]{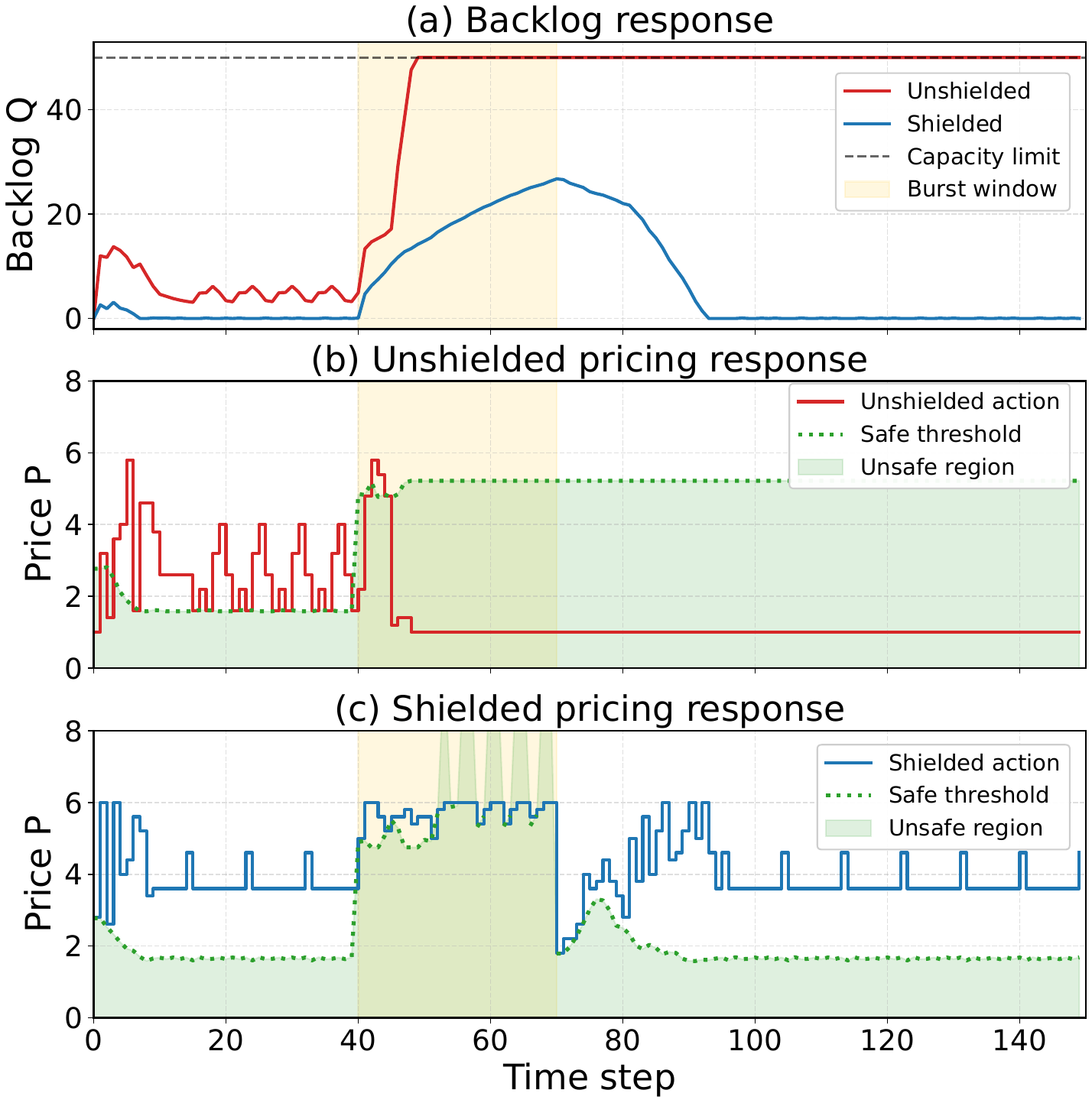}
\caption{Burst demand shift test. (a) Backlog response under shielded and unshielded execution. (b) Unshielded executed price versus its guardrail threshold. (c) Shielded executed price versus its guardrail threshold.}
\label{fig:exp4_burst}
\end{figure}

Finally, we test the shielded policy under a burst demand shift.
To this end, we stress-test the learned policy under an out-of-distribution demand shift from the steady-demand training setting \cite{fujimoto2024assessing}. Concretely, we scale the utility weights of all tenant types by a factor of $3$ during the time window $t\in[40,70)$, i.e., $w_k(t)=3w_k$ during the burst and $w_k(t)=w_k$ otherwise. We use a mild initial condition $s_0^{\mathrm{burst}}=(0,2,2,1)$ so that both policies remain well-behaved before the burst. We evaluate greedy policies learned under steady demand, with no exploration during evaluation: the unshielded policy executes $a_t=\arg\max_a \widehat Q(s_t,a)$, while the shielded policy executes $\tilde a_t=\SafeFilter(s_t,a_t)$ using a time-varying guardrail recomputed under the current $w_k(t)$.


Fig.~\ref{fig:exp4_burst} shows a clear separation. The unshielded policy frequently proposes prices below the guardrail threshold, enters an effectively undrainable regime, and drives the backlog toward the truncation boundary. In contrast, shielded execution keeps the realized actions guardrail-feasible throughout, raises the executed price to the elevated safe threshold during the burst, maintains bounded backlog dynamics, and returns to a stable operating level after the burst subsides. Thus, the burst test further shows that the action shield improves robustness to abrupt demand shifts, whereas the unshielded policy fails to recover safely.

Taken together, these experiments demonstrate that guardrail-aware action shield substantially improves safety and stability in model-free RL under both steady and bursty demands.

\section{Conclusion and future work}
\label{sec:conclusion}
In this work, we have studied the joint pricing-and-scaling problem in multi-tenant GPU cloud platforms with provisioning frictions and heterogeneous endogenous demand through a large-population Stackelberg game formulation. The Stackelberg formulation reveals a residual-demand mechanism generated by delay-insensitive tenants that can render the backlog structurally undrainable under bounded price and capacity. For the fixed price--capacity recursion, we have derived a computable drainability guardrail and established uniqueness of the operating point and global convergence under a checkable step-size condition. We have then translated the
guardrail into an online action shield with a residual-regime negative-drift certificate for dynamic control. Experiments have shown that the proposed guardrail-aware action shield improves the safety and robustness of model-free RL for this pricing and scaling task under demand shifts.

Future work will extend the framework to stochastic demand and service dynamics.


\bibliographystyle{IEEEtran}
\bibliography{references}
\end{document}